\def\BibTeX{{\rm B\kern-.05em{\sc i\kern-.025em b}\kern-.08em
    T\kern-.1667em\lower.7ex\hbox{E}\kern-.125emX}}
\DeclareMathOperator{\Tr}{Tr}
\def\endthebibliography{%
  \def\@noitemerr{\@latex@warning{Empty `thebibliography' environment}}%
  \endlist
}
  \providecommand\BibTeX{{%
    Bib\TeX}}}
\newtheorem{assumption}{Assumption}
\newtheorem{lem}{Lemma}
\newtheorem{prop}{Proposition}
\newtheorem*{proposition1.1}{Proposition 1.1}
\newtheorem*{proposition1.2}{Proposition 1.2}
\newtheorem*{proposition1.3}{Proposition 1.3}
\newtheorem*{proposition2.1}{Proposition 2.1}
\newtheorem*{proposition2.2}{Proposition 2.2}
\def\BibTeX{{\rm B\kern-.05em{\sc i\kern-.025em b}\kern-.08em
    T\kern-.1667em\lower.7ex\hbox{E}\kern-.125emX}}
\begin{document}

\title{Optimization for the Metaverse over Mobile Edge Computing with Play to Earn\vspace{-5pt}}

\author{Chang Liu$^1$, Terence Jie Chua$^1$, Jun Zhao$^2$\\
$^1$Graduate College, Nanyang Technological University, Singapore\\
$^2$School of Computer Science \& Engineering, Nanyang Technological University, Singapore\\
E-mail:\{liuc0063@e.ntu.edu.sg, terencej001@e.ntu.edu.sg, junzhao@ntu.edu.sg\}\vspace{-15pt}
}
\thispagestyle{fancy}
\lhead{This work appears as a full paper in IEEE Conference on Computer Communications (INFOCOM) 2024.}
\cfoot{\thepage}
\renewcommand{\headrulewidth}{0.4pt}
\renewcommand{\footrulewidth}{0pt}

\maketitle
\thispagestyle{fancy}
\lhead{This work appears as a full paper in IEEE Conference on Computer Communications (INFOCOM) 2024.}
\cfoot{\thepage}
\renewcommand{\headrulewidth}{0.4pt}
\renewcommand{\footrulewidth}{0pt}
\pagestyle{fancy}
\lhead{This work appears as a full paper in IEEE Conference on Computer Communications (INFOCOM) 2024.}
\cfoot{\thepage}
\renewcommand{\headrulewidth}{0.4pt}
\renewcommand{\footrulewidth}{0pt}

\begin{abstract}
The concept of the Metaverse has garnered growing interest from both academic and industry circles. 
The decentralization of both the integrity and security of digital items has spurred the popularity of \textit{play-to-earn} (P2E) games, where players are entitled to earn and own digital assets which they may trade for physical-world currencies. However, these computationally-intensive games are hardly playable on resource-limited mobile devices and the computational tasks have to be offloaded to an edge server. Through mobile edge computing (MEC), users can upload data to the Metaverse Service Provider (MSP) edge servers for computing. 
Nevertheless, there is a trade-off between user-perceived in-game latency and user visual experience. The downlink transmission of lower-resolution videos lowers user-perceived latency while lowering the visual fidelity and consequently, earnings of users. 
In this paper, we design a method to enhance the Metaverse-based mobile augmented reality (MAR) in-game user experience.
Specifically, we formulate and solve a multi-objective optimization problem.
Given the inherent NP-hardness of the problem, we present a low-complexity algorithm to address it, mitigating the trade-off between delay and earnings.
The experiment results show that our method can effectively balance the user-perceived latency and profitability, thus improving the performance of Metaverse-based MAR systems.
\end{abstract}

\begin{IEEEkeywords}
Metaverse, mobile edge computing, \textit{play-to-earn}, latency, wireless networks.
\end{IEEEkeywords}



\vspace{-7pt}\section{Introduction}\vspace{-3pt}

The Metaverse is a virtual environment on public blockchain technology where users create, share, and own digital assets, interacting through virtual avatars. Each world within the Metaverse has its own digital currency, facilitating economic activities like in-game purchases and virtual land acquisition. These digital tokens hold significant value, motivating people to play games to earn them~\cite{farrington_2021}. With a growing trend towards digital activities and gaming~\cite{grayscale}, the combination of better games, AR/VR advancements, and the desire for asset ownership drives users to participate in \textit{play-to-earn} (P2E) games within the Metaverse. This ownership of digital assets and tokens opens up opportunities for virtual world economies and trade~\cite{business}.

The Metaverse and its extended reality applications such as P2E present computation-intensive challenges for game graphics and gameplay \cite{cai2022compute}, especially on current mobile device computing hardware with limited resources. To ensure a seamless real-to-world experience for users, scenes delivered by VR devices must have a minimal delay (within 20ms) and demand high bitrates up to Gbps for synchronization \cite{yu20226g}. To address this limitation, the convergence of MEC with the Metaverse gives rise to a new generation of MEC-empowered Metaverse~\cite{zhang2017towards}. This integration prioritizes enhancing real-time performance in virtual reality systems, system mobility, and ultra-reliability, leveraging the capabilities of B5G or 6G technologies.

In the Metaverse system integrated with MEC, efficient communication is crucial for a seamless virtual world experience. However, high-resolution graphics and large data transmission can lead to increased latency, affecting players' interactions and experiences in P2E games. Minimizing latency is vital to ensure player satisfaction and profitability. Lowering video resolution can reduce data size and latency, but it compromises graphic quality, impacting gameplay and token earnings. Balancing resolution and token earnings becomes a trade-off. Additionally, optimizing user-server association is necessary to address workload imbalance among edge servers, which can cause service latency issues. Developing an effective user-server association strategy is essential for enhancing Metaverse service performance and reducing latency.

In this paper, we identify a problem where MSP edge servers have limited resources for computation and communication, limiting their ability to support a rapidly growing player base. In the absence of resource allocation regulation, players will always expect the best possible game environment in terms of quality and connectivity. However, this may not be feasible for a resource-limited MSP edge server. We present a framework for optimizing communication and computation that strikes a balance between the delay experienced by players and their in-game token earnings. 
This scheme aids the MSP in controlling transmission data and the association of players with MSP edge servers, while also motivating skilled players to earn more game tokens. Our contributions can be summarized as follows:
\begin{itemize}[leftmargin=*]
    \item We introduce the Metaverse system over MEC. We explain the profitability of playing games in the virtual world.
    \item We identify a communication and computation latency problem in mobile augmented reality (MAR) system, and formulated a multi-objective optimization problem which aims to maximize the user utility by controlling the transmit power of users, the resolution of the downlink video and efficiently allocating players to Metaverse edge servers. 
    \item  To solve the multi-objective optimization problem, alternating optimization is utilized. To handle the non-convexity, we transform it into a quadratically constrained quadratic programming (QCQP) problem, and apply semidefinite relaxation (SDR) approach to get an approximation solution. After the optimal user-to-server association is obtained, we optimize the resolution of the downlink video alternately, thus balancing the trade-off between service latency and potential token earnings.
    \item We propose and evaluate the earning functions on real-world datasets. Our proposed algorithm is shown to have benefits in simulation results, as it effectively enhances players' in-game earnings while keeping latency low. This results in improved user utility.

\end{itemize}

The remainder of this paper is organized as follows. In Section \ref{sec-Related-Work}, we discuss related works.
In Section \ref{sec-System}, we introduce the system model for Metaverse-based MEC system. 
Next, we present the formulation and analysis of the problem in Section \ref{sec-analysis}.
In Section \ref{sec-function}, we evaluate the earning function of players by analyzing a real-world dataset.
The simulation results are presented in Section \ref{sec-Simulation}. 
Finally, we conclude the paper in Section \ref{sec-Conclusion}.

\vspace{-4pt}\section{Related works} \vspace{-1pt}\label{sec-Related-Work}
\textbf{Resource Allocation in the Metaverse:} Some existing works consider the resource allocation problem in Metaverse-based systems.
Chu \textit{et al.}~\cite{chu2022metaslicing} introduces the MetaSlicing framework that can efficiently manage and allocate different types of resources by grouping applications into clusters and designing an admission control algorithm.
Han \textit{et al.}~\cite{han2021dynamic} proposes a resource allocation framework for Internet of Things to facilitate the syncing of Metaverse with physical world.
Ng \textit{et al.}~\cite{ng2021unified} propose a framework that uses stochastic optimal resource allocation to minimize the cost to an MSP in the context of the education sector.
However, these works fail to consider the communication between the Metaverse users and the MSP.

In contrast to the abovementioned works, Si \textit{et al.}~\cite{si2022resource} proposes a communication resource allocation algorithm in the Metaverse-based MAR system.
Nevertheless, it only considers one centralized base station to provide Metaverse service. 
In our paper, we consider multiple MSP edge servers and design the user-to-edge association strategy.
Van \textit{et al.}~\cite{van2022edge} jointly considers the integrated model of communications, computing, and storage policy through ultra-reliable and low latency communications (URLLC).
However, they only aim at minimizing the latency in the objective function.
Besides, they fail to consider video resolution as part of the optimization problem.

\textbf{Task-Offloading in Mobile Edge Computing (MEC):} Tasks-offloading in MEC has attracted considerable attention in prior works (e.g., \cite{dai2018joint,tan2017online,jia2016cloudlet,tong2016hierarchical,liu2018edge}).
Dai \textit{et al.}~\cite{dai2018joint} develops an efficient computation offloading algorithm while considering computation resource allocation and transmission power allocation at the same time.
Tan \textit{et al.}~\cite{tan2017online} proposes an algorithm that aims to reduce latency by reallocating compute tasks among cloud servers. 
Similarly, work by Jia \textit{et al.}~\cite{jia2016cloudlet} shows that the latency between edge devices and the cloud could be minimized through a load-balancing framework. 
Tong \textit{et al.}~\cite{tong2016hierarchical} aims to reduce the task execution time by using a task dispatch algorithm and hierarchical cloud architecture to dynamically allocate tasks to different edge servers. 
Wang \textit{et al.}~\cite{wang2020joint} and Chen \textit{et al.}~\cite{chen2021energy} propose to jointly optimize server assignment and resource management in MEC-based MAR system.
However, these works focus mainly on reducing service latency or energy and do not take into account other task success measures in their objective function.
In contrast to the above-mentioned works, Liu \textit{et al.}~\cite{liu2018edge} and Wang \textit{et al.}~\cite{wang2020user} propose to tackle the trade-off between network latency, energy, and video object detection accuracy for mobile augmented reality systems.
In their papers, they overlook the context of the Metaverse and proceed to directly formulate a convex optimization problem. 
Subsequently, they employ standard convex optimization techniques to address it. 
This approach, while methodologically sound, fails to take into account the unique aspects and challenges presented by the Metaverse-based MEC system.
Our paper distinguishes itself from previous task-offloading studies by focusing on MEC in the context of the Metaverse.
We introduce a new optimization function tailored to the Metaverse's unique user utility model, providing new insights into virtual economic paradigms. 
By addressing these novel aspects, our work advances MEC research for the rapidly evolving Metaverse, enhancing user utility and resource management in this emerging paradigm.

\vspace{-4pt}\section{System model} \label{sec-System}\vspace{-1pt}


In this section, we give a description of the system model for Metaverse-based MEC system. The system model consists of transmission delay, computation delay and profit MAR users can make in the Metaverse \textit{play-to-earn} (P2E) applications. 

We consider a model which has $K$ Metaverse users, where $\mathcal{K}$ denotes the set of users, and $N$ MSP edge servers, where $\mathcal{N}$ denotes the set of MSP edge servers. 
The Metaverse users communicate with MSP edge servers via wireless channels. The proposed system model can be generalized to any other MAR applications in Metaverse.

\vspace{-2pt}\subsection{Communication Model}\vspace{-1pt}

Without loss of generality, we consider \textit{Orthogonal Frequency Division Multiple Access} (OFDMA) communication technique in the communication model.
Let $B$ be the total bandwidth of the system.
Assume the bandwidth is equally allocated to all the Metaverse users. 
According to Shannon's formula, the achievable uplink transmission rate of user $k$ and MSP edge server $n$ is
\vspace{-7pt}\begin{align}
    R_{k,n}^u = \frac{B}{K} \log_2 (1+\frac{g_k p_k}{B \sigma^2 / K}),\\[-21pt]\nonumber
\end{align}
where $g_k$ and $p_k$ are the channel gain and transmit power of user $k$ and $\sigma^2$ denotes the noise power density.

The communication delay between MSP edge servers and MAR users is determined by the transmitted data size and the transmission rate of the wireless channels. In the Metaverse P2E applications, the MAR users have to upload data to the MSP edge servers for computing and then download the processed data afterwards.

Denote $D_k^u$ and $D_k^d$ as the size of uplink data and downlink data for MAR user $k$ respectively.
The uplink data contains the users' posture, orientation, and position information.
The size of the uplink data depends on factors like the tracking technology being used, the complexity of the user's movements, and the frequency at which the data is sampled.
How to design the tracking technology is beyond the scope of this paper.
Thus, we consider the size of uplink data as constant without optimizing it.
The size of downlink data is decided by the resolution of the downloaded video.
Let $s_k$ denote the resolution (the number of pixels) of the video downloaded by user $k$.
The required resolution $s_k$ for the Metaverse users is denoted by $s_{min} \leq s_k \leq s_{max}$, where $s_{min}$ denotes 720p (1280$\times$720 pixels, also known as HD) and $s_{max}$ represents 8k (7680$\times$4320 pixels, popularly known as Full Ultra HD) Metaverse service, respectively.
Although the video resolution can only take on discrete values in practice, we treat it as a continuous variable during optimization.
In real-world applications, the continuous values can then be rounded or discretized to the nearest valid value supported by the display or video codec being used.
The Metaverse frame is composed of two images, one for each eye, and each pixel in these images is represented by 24 bits.
Thus, the relationship between the downlink data size $D_k$ and the resolution of the video $s_k$ can be given as $D_k^d = \frac{24\times2\times s_k}{Com_k}$,
where $Com_k$ is the compression ratio of user $k$.
Let $R_{k,n}^d$ be the downlink transmission rate between MAR user $k$ and MSP edge server $n$. 
Then, once an MAR user is allocated to an MSP edge server, the uplink and downlink wireless transmission delay experienced by the MAR user $k$ is modeled as
\vspace{-1pt}\begin{align}
    L_k^u = \frac{D_k^u}{R_{k,n}^u}, L_k^d = \frac{D_k^d}{R_{k,n}^d},
\end{align}
where $L_k^u$ is the uplink latency and $L_k^d$ is the downlink latency.
Accordingly, to transmit data within a time duration $L_k^u$, the energy consumption of user $k$ is expressed as
\vspace{-5pt}\begin{align}
    E_k = p_k L_k^u = \frac{p_kD_k^u}{R_{k,n}^u}.
\end{align}

As the MSP edge servers are always plugged in, we only focus on the user-side resource and experience.
Thus, the energy consumption of the MSP edge servers is not considered in this paper.

\vspace{-4pt}\subsection{Computation Model}\vspace{-1pt}
When MAR user $k$ uploads its data to MSP edge server $n$, the computation process at MSP edge server $n$ incurs computation latency. This latency is related to the computation resources available of MSP edge server $n$. 
Let $a_{k,n} \in \{0,1\}$ represent the MSP edge server allocation indicator, where $a_{k,n}=1$ means MAR user $k$ associates with MSP edge server $n$ for computing. 
Otherwise, $a_{k,n}=0$. We assume that each MAR user can only associate with one MSP edge server at a time. The user-server association rule can be given as:
\begin{align}
    \left\{
        \begin{aligned}
            &a_{k,n} \in \{0,1\},\forall k\in \mathcal{K},\forall n\in \mathcal{N}, \\
            &\sum_{n \in \mathcal{N}} a_{k,n} = 1,\forall k\in \mathcal{K}.
        \end{aligned}
    \right.
\end{align}
For an MSP edge server $n$, the number of MAR users it associates with is $\sum_{k \in \mathcal {K}} a_{k,n}$.
Following the computation model in~\cite{liu2018edge}, we assume the computation resource of an MSP edge server is equally distributed among the MAR users who are associated with the MSP edge server. 
Therefore, the computing resource a Metaverse user can utilize is given as
\vspace{-2pt}\begin{align}
c_n = \frac{f_n}{\sum_{k \in \mathcal{K}}a_{k,n}}, \label{C_n}
\end{align}
where $f_n$ is the computational recourses of MSP edge server $n$. 
The computational latency experienced by a user is determined by the level of computational complexity required to perform their task, as well as the extent of computational resources available on the servers~\cite{sardellitti2015joint}.
We define $\lambda_u$ and $\lambda_d$ as the computational complexity per unit of uplink data and downlink data requires. The computation overhead depends on the size of uplink and downlink data. Then, the computation delay experienced by MAR user $k$ is formulated as
\vspace{-1pt}\begin{align}
    L_k^p = \sum_{n \in \mathcal{N}} ( a_{k,n} \frac{\lambda_u D_k^u + \lambda_d D_k^d}{c_n} ). 
\end{align}
\vspace{-15pt}\subsection{Metaverse \textit{Play-to-Earn} Model}\vspace{-3pt}
Metaverse is a digital environment where users generate content and interact with others. Users are able to interact with this digital virtual environment and also produce content that affects the local digital environment. 
With the idea of open economy and financial rewards, in P2E model, users add value by playing and spending time in the P2E gaming system. Motivated by the definition of evaluation function in~\cite{xiong2020contract}, in this paper we define the following resolution-impacted earning function:
\vspace{-5pt}\begin{align}
    g_k(s_k)= \tau_k h_k(s_k),\label{g_(x)}\\[-16pt]\nonumber
\end{align}
where $\tau_k>0$ is a measure of different users' earning abilities and $h_k(s_k)$ indicates the players' experience.
The earning function will be further investigated in Section \ref{sec-function}.
\vspace{-1pt}\begin{assumption}\label{assumption:earning_function}
$g_k(s_k)$ is an increasing, twice differentiable and concave function with respect to $s_k>0$; i.e., $\frac{\mathrm{d}g_k(s_k)}{\mathrm{d} s_k}>0, \frac{\mathrm{d}^2g_k(s_k)}{\mathrm{d} s_k^2}<0, \forall s_k > 0$.
\end{assumption}\vspace{-2pt}
The specific expression of $g_k(s_k)$ will be investigated in Section \ref{sec-function}.


\vspace{-5pt}\section{Problem Formulation and Analysis} \label{sec-analysis}\vspace{-5pt}

In this section, we formulate the problem as a multi-objective optimization problem. Then the analysis and solutions to this problem are provided.
\begin{figure*}[htb]
\vspace{-10pt}
\centering
\includegraphics[width=0.88\textwidth]{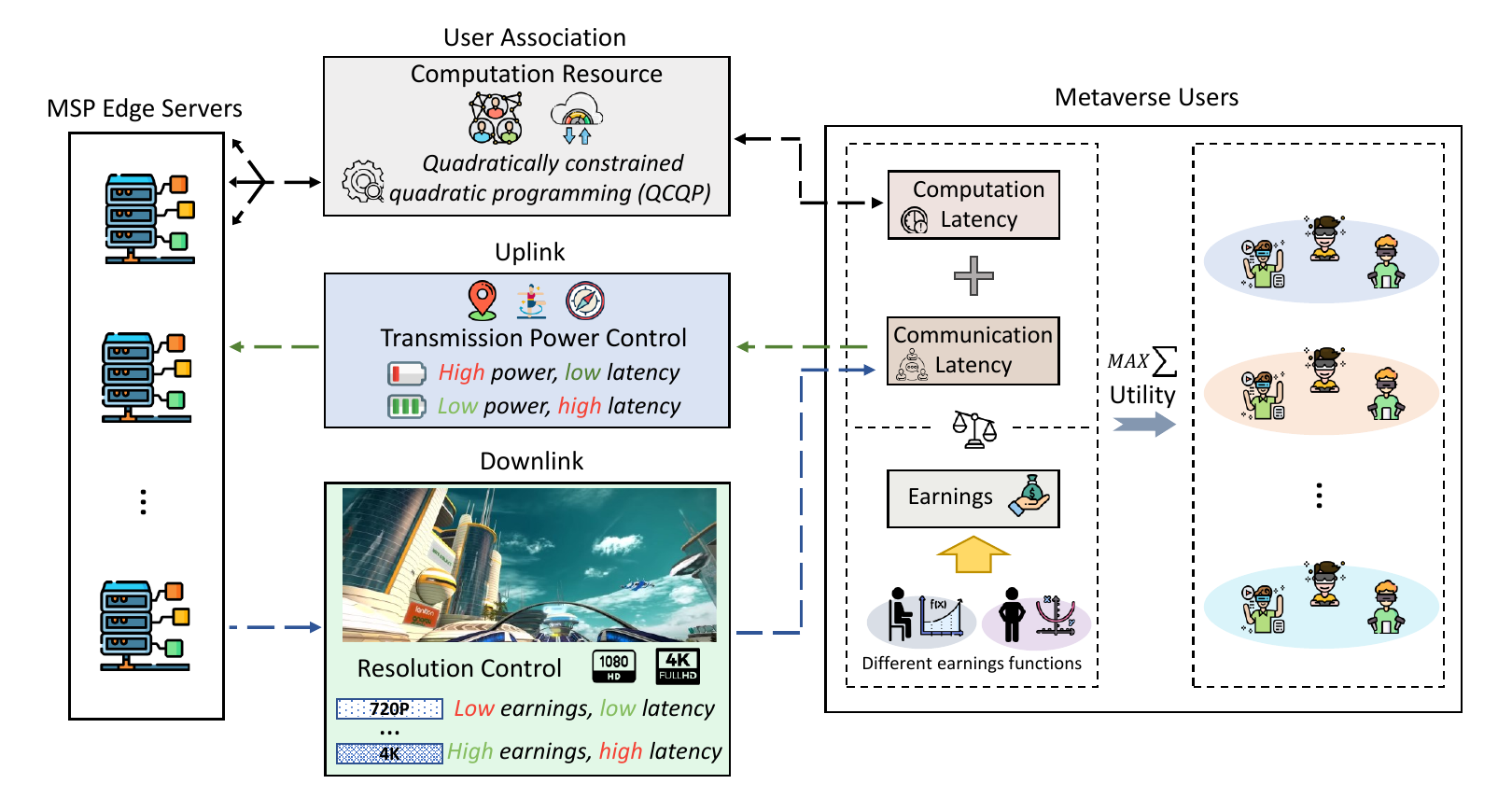} \vspace{-5pt}
\vspace{-3pt}\caption{An overview of the system framework and the proposed optimization problem in the Metaverse-based MAR system. The multi-objective problem aims to maximize the overall user utility through the management of the user-to-edge association, uplink transmission, and downlink transmission.} \label{fig_system}\vspace{-23pt}
\end{figure*}
\vspace{-5pt}\subsection{Problem Formulation}\vspace{-3pt}
Based on the communication model, computation model and Metaverse P2E model, the utility of MAR user $k$ can be given as weighted sum~\cite{marler2010weighted} of latency and earnings:
\vspace{-5pt}\begin{align}\label{user_utility}
    U_k:= \eta_e \cdot g_k(s_k) -  \eta_l \cdot\omega (L_k^u + L_k^d + L_k^p),\\[-20pt]\nonumber
\end{align}
where $\omega$ is a positive weight parameter indicating the preference between delay and earnings. $\eta_e$ and $\eta_l$ are normalization factors. By setting different $\omega$, the trade-off between earnings and user-perceived latency can be controlled.
In the Metaverse-based MAR system, we aim to minimize the latency of all MAR users and maximize their earnings. Thus, the multi-objective optimization problem is formulated as:

\begin{subequations}\label{pro:max_u}
\begin{align}
    \mathscr{P}_1: \max_{\{s_k\},\{p_k\},\{a_{k,n}\}} &\sum_{k \in \mathcal{K} }U_k \tag{\ref{pro:max_u}}\\[-3pt]
    \text{s.t.~}~& s_{min} \leq s_k \leq s_{max},\forall k\in \mathcal{K}, \label{con:resolution_1}\\
    &a_{k,n} \in \{0,1\},\forall k\in \mathcal{K},\forall n\in \mathcal{N},\label{con:UA_1}\\
    ~& \sum_{n \in \mathcal{N}} a_{k,n} = 1,\forall k\in \mathcal{K}, \label{con:UA_2}\\[-3pt]
    &E_k \leq E_k^{max}, \forall k\in \mathcal{K}, \label{con:energy_1}\\
    &0 \leq p_k \leq p_{max}, \forall k\in \mathcal{K}. \label{con:power}\\[-18pt]\nonumber
\end{align}
\end{subequations}
In this problem, the objective function is the sum of users' utilities, which is a weighted sum of latency and earnings.
Constraint (\ref{con:resolution_1}) confines the selection of video resolution in the download process. Constraints (\ref{con:UA_1}) and (\ref{con:UA_2}) guarantee that one MAR user is assigned to one MSP edge server.
Constraint (\ref{con:energy_1}) requires that the energy of uplink transmission for user $k$ does not exceed an energy budget $E_k^{max}$. 
Constraint (\ref{con:power}) limits the maximum transmit power of the user $k$ to $p_{max}$.
The $E_k^{max}$ depends on the battery capacity or energy limit configured by the user.
Figure~\ref{fig_system} presents an overview of the system framework and the proposed optimization problem.

Problem $\mathscr{P}_1$ is considered to be a mixed-integer nonlinear programming with nonlinearity in the objective as well as continuous and integer variables in the constraints. It combines the combinatorial difficulty of optimizing over discrete variable sets with the challenges of handling nonlinear function. Thus, it is difficult to solve~\cite{belotti2013mixed}.
To determine the maximum value of $\sum_{k \in \mathcal{K} }U_k$, it's equivalent to identifying the minimum value of $-\sum_{k \in \mathcal{K} }U_k$. 
Then problem $\mathscr{P}_1$ can be rewritten in the following form:
\vspace{-5pt}\begin{subequations}\label{pro_min_u}
\begin{align}
    \mathscr{P}_2: \min_{\{s_k\},\{p_k\},\{a_{k,n}\}}& F = -\sum_{k \in \mathcal{K} }U_k \tag{\ref{pro_min_u}}\\
    \text{s.t.}~ &\text{(\ref{con:resolution_1}),~(\ref{con:UA_1}),~(\ref{con:UA_2}),~(\ref{con:energy_1}),~(\ref{con:power}).} \nonumber\\[-21pt]\nonumber
\end{align}
\end{subequations}


In the subsequent section, we begin by identifying the optimal transmit power $\{p_k\}$  tailored for the multi-objective optimization problem.
Following this, the user-to-server association and the downlink video resolution that maximizes the overall user utility are determined.
\vspace{-3pt}\subsection{Optimal Transmit Power}
The proposition presented below provides a systematic approach for determining the optimal transmit power tailored for each Metaverse user $k$.
\begin{prop}\label{prop1}
Given the user-to-server association $\{a_{k,n}\}$ and downlink video resolution $\{s_k\}$, the optimal transmit power $p_k$ of user $k$ is given by
\vspace{-5pt}\begin{align}
    p_k^* = \min \{p_{max},\hat{p_k}\}, \label{solution_pk}\\[-21pt]\nonumber
\end{align}
where
\vspace{-5pt}\begin{align}
    \hat{p_k} = -\frac{BE_k^{max}}{\ln2\cdot D_k^u K}W(-\frac{\ln2\cdot D_k^u \sigma^2}{E_k^{max}g_k}e^{-\frac{\ln2\cdot D_k^u \sigma^2}{E_k^{max}g_k}})-\frac{B \sigma^2}{g_kK}. \label{p_Lambert}\\[-21pt]\nonumber
\end{align}
Here, $W(\cdot)$ denotes Lambert $W$ function.
\end{prop}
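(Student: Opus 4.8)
The plan is to exploit the fact that the transmit power $p_k$ enters problem $\mathscr{P}_2$ through only two channels: the uplink latency $L_k^u = D_k^u/R_{k,n}^u$ that appears in the objective $F$, and the energy constraint~(\ref{con:energy_1}). It affects neither the downlink latency $L_k^d$ (governed by the server-side rate $R_{k,n}^d$), nor the computation latency $L_k^p$, nor the earning term $g_k(s_k)$. Hence, with $\{a_{k,n}\}$ and $\{s_k\}$ held fixed, the minimization of $F$ over $\{p_k\}$ decouples across users, and for each $k$ the only $p_k$-dependent contribution to $F$ is $\eta_l\,\omega\,D_k^u/R_{k,n}^u$.

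First I would establish two monotonicity facts. Since $R_{k,n}^u$ is strictly increasing in $p_k$, the term $L_k^u$ and thus $F$ is strictly decreasing in $p_k$, so the optimum is attained at the largest feasible power. The feasible set is $[0,p_{max}]\cap\{p_k : E_k\le E_k^{max}\}$. To characterize the second set I would show $E_k(p_k)=p_k D_k^u/R_{k,n}^u$ is strictly increasing on $(0,\infty)$: writing $x=g_k p_k K/(B\sigma^2)$, the sign of $\mathrm{d}E_k/\mathrm{d}p_k$ reduces to that of $\ln(1+x)-x/(1+x)$, which is the key elementary inequality (positive for all $x>0$). Monotonicity makes $E_k\le E_k^{max}$ equivalent to $p_k\le\hat{p}_k$, where $\hat{p}_k$ is the unique root of $E_k(\hat{p}_k)=E_k^{max}$; combining with the box constraint~(\ref{con:power}) yields $p_k^*=\min\{p_{max},\hat{p}_k\}$, which is~(\ref{solution_pk}).

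The remaining and most technical step is solving $E_k(\hat{p}_k)=E_k^{max}$ in closed form. Substituting the rate expression and clearing the logarithm gives a transcendental equation $\ln(1+\beta\hat{p}_k)=\gamma\hat{p}_k$, with $\beta=g_k K/(B\sigma^2)$ and $\gamma=\ln 2\cdot D_k^u K/(B E_k^{max})$. Exponentiating and applying the change of variable $v=-\gamma(1+\beta\hat{p}_k)/\beta$ recasts this as $v e^{v}=-(\gamma/\beta)e^{-\gamma/\beta}$, whose solution is expressed through the Lambert $W$ function as $v=W(-(\gamma/\beta)e^{-\gamma/\beta})$. Back-substituting $\hat{p}_k=-v/\gamma-1/\beta$ and simplifying $\gamma/\beta=\ln2\cdot D_k^u\sigma^2/(E_k^{max}g_k)$, $1/\gamma$, and $1/\beta$ reproduces exactly~(\ref{p_Lambert}).

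I expect the last step to be the main obstacle, not merely for the algebra but for the correct branch selection. The root $v=-\gamma/\beta$ recovers the trivial $\hat{p}_k=0$, whereas the physically meaningful solution needs $1+\beta\hat{p}_k>1$, i.e.\ $v<-\gamma/\beta$. Under the feasibility condition $\gamma<\beta$ (equivalently $E_k^{max}>\ln2\cdot D_k^u\sigma^2/g_k$), which is precisely what guarantees a strictly positive root exists, this forces the lower ($W_{-1}$) branch. I would therefore flag that the $W$ in~(\ref{p_Lambert}) must be read as that branch, ensuring $\hat{p}_k>0$ as required, with the edge case $\gamma\ge\beta$ (no interior energy-limited root) handled by $p_k^*=p_{max}$.
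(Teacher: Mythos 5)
Your argument is correct and is essentially the paper's proof: both reduce to showing that the objective is strictly decreasing in $p_k$ while the consumed energy $y(p_k)=p_kD_k^uK/\bigl(B\log_2(1+g_kp_kK/(B\sigma^2))\bigr)$ is strictly increasing (the paper's key inequality $(1+\tfrac1t)\ln(1+t)-1>0$ is your $\ln(1+x)-x/(1+x)>0$ after multiplying by $t/(1+t)$), so the optimum is the larger feasible power $\min\{p_{max},\hat{p}_k\}$ with $\hat{p}_k$ the root of $y(\hat{p}_k)=E_k^{max}$; the paper phrases the final step as a contradiction, but the content is identical. You go beyond the paper on the Lambert-$W$ step, which the appendix merely asserts: your substitution and your point that the nontrivial root lies on the $W_{-1}$ branch (the principal branch returns the spurious root $\hat{p}_k=0$) are correct and genuinely fill a gap the paper leaves open. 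One small correction: when $E_k^{max}\le \ln2\cdot D_k^u\sigma^2/g_k$, the energy constraint excludes \emph{every} $p_k>0$ (since $y(p_k)$ decreases to that value as $p_k\to0^+$), so that corner is an infeasibility rather than ``$p_k^*=p_{max}$'' as you state --- though the paper silently assumes this case away as well.
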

\begin{proof}
See Appendix \ref{Appendix:1}.
\end{proof}
Building upon Proposition \ref{prop1}, the optimal transmit power is determined.
Next, we will tackle the downlink video resolution and the user-to-server association problem.
\subsection{Alternating Optimization}
To solve problem $\mathscr{P}_2$, a sub-optimal alternative algorithm based on alternating optimization is proposed in this subsection. 
First, we introduce the following two lemmas.
\begin{lem}\label{lemma2}
The problem $\mathscr{P}_2$ with fixed transmit power and user-to-server association is a convex problem with respect to resolution $\{s_k\}$.
\end{lem}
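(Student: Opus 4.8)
The plan is to show that, once the transmit power $\{p_k\}$ and the association $\{a_{k,n}\}$ are fixed, the objective $F=-\sum_{k\in\mathcal{K}}U_k$ becomes a sum of per-user functions of the single remaining variable $s_k$, and that each such summand is convex on the feasible box $s_{min}\le s_k\le s_{max}$. Since a sum of convex functions is convex, and the only active constraint in the $\{s_k\}$ variables is the box constraint \eqref{con:resolution_1}, which defines a convex set, the restricted problem is convex. So the crux is to analyze the dependence of $F$ on each $s_k$ separately.

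First I would isolate, inside $U_k$, all the terms that actually depend on $s_k$. Recall $U_k=\eta_e\, g_k(s_k)-\eta_l\,\omega(L_k^u+L_k^d+L_k^p)$. With $p_k$ and $a_{k,n}$ fixed, the uplink latency $L_k^u$ and the uplink portion of $L_k^p$ are constants in $s_k$, so they drop out of the second derivative. The only $s_k$-dependent pieces are the earning term $g_k(s_k)$ and the downlink quantities, which enter linearly through $D_k^d=\frac{24\times2\times s_k}{Com_k}$: the downlink transmission latency $L_k^d=\frac{D_k^d}{R_{k,n}^d}$ is linear in $s_k$, and the downlink computation latency $\frac{\lambda_d D_k^d}{c_n}$ is also linear in $s_k$. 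Thus the contribution of user $k$ to $F$ has the form $-\eta_e\, g_k(s_k)+(\text{linear in }s_k)+(\text{constant})$.

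Next I would differentiate twice with respect to $s_k$. The linear and constant terms vanish under the second derivative, leaving $\frac{\mathrm{d}^2}{\mathrm{d}s_k^2}\big[-\eta_e\, g_k(s_k)\big]=-\eta_e\,\frac{\mathrm{d}^2 g_k(s_k)}{\mathrm{d}s_k^2}$. By Assumption \ref{assumption:earning_function}, $g_k$ is twice differentiable with $\frac{\mathrm{d}^2 g_k(s_k)}{\mathrm{d}s_k^2}<0$ for all $s_k>0$, and $\eta_e>0$, so this second derivative is strictly positive. Hence each per-user summand is (strictly) convex in $s_k$, and since the $s_k$ are separable across users, the full Hessian of $F$ with respect to the vector $\{s_k\}$ is diagonal with positive entries, i.e.\ positive definite. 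Together with the convex box constraint \eqref{con:resolution_1}, this establishes that $\mathscr{P}_2$ restricted to $\{s_k\}$ is a convex program.

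The main obstacle, such as it is, is bookkeeping rather than depth: I must verify carefully that every $s_k$-dependent term entering through $D_k^d$ is exactly linear (so that it contributes nothing to the second derivative), and that no hidden coupling through $c_n$ or $R_{k,n}^d$ reintroduces curvature once the association is frozen. Since $c_n=\frac{f_n}{\sum_{k}a_{k,n}}$ and $R_{k,n}^d$ depend only on the fixed $\{a_{k,n}\}$ and fixed channel parameters, they are constants here, confirming linearity. The convexity then rests entirely on the concavity of $g_k$ supplied by Assumption \ref{assumption:earning_function}, so the whole argument reduces to invoking that assumption after discarding the affine terms.
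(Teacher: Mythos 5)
Your proof is correct and takes essentially the same route as the paper's: both reduce the argument to the observation that the Hessian of $F$ in $\{s_k\}$ is diagonal with entries $-\eta_e\,\frac{\mathrm{d}^2 g_k(s_k)}{\mathrm{d}s_k^2}>0$ by Assumption~\ref{assumption:earning_function}, while the remaining $s_k$-dependence (through $D_k^d$ in the latency terms) is affine and the box constraint~(\ref{con:resolution_1}) is convex. Your explicit verification that $c_n$ and $R_{k,n}^d$ are constants once the association is frozen is a detail the paper leaves implicit, but the substance is identical.
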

\begin{proof}
For any feasible $s_{min} \leq s_k \leq s_{max}$,
\vspace{-1pt}\begin{align}
       \frac{\partial^2 F}{\partial s_i \partial s_j} = \left\{
    \begin{aligned}
        &-\eta_e \tau_k \frac{\mathrm{d}^2h_i(s_i)}{\mathrm{d} s_i^2},&i=j,\\
        & \quad0, &\text{otherwise.} \\
    \end{aligned}
    \right.\\[-18pt]\nonumber
\end{align}

According to Assumption \ref{assumption:earning_function}, $g_k(s_k)$ is a twice differentiable and concave function. Since $\tau_k>0$, $\frac{\mathrm{d}^2h_i(s_i)}{\mathrm{d} s_i^2}<0$.
$\eta_e$ is a positive value, thus, $-\eta_e \tau_k \frac{\mathrm{d}^2h_i(s_i)}{\mathrm{d} s_i^2} > 0$. Thus, the Hessian matrix $\mathbf{H}_{ij}^1=\frac{\partial^2 F}{\partial s_i \partial s_j}$ is symmetric and positive definite. Constraint (\ref{con:resolution_1}) is linear. Therefore, problem $\mathscr{P}_2$ is convex with respect to $\{s_k\}$ under fixed user-to-server association.
\end{proof}
With Lemma \ref{lemma2}, the optimal resolution $\{s_k\}$ can be obtained by utilizing convex optimization solvers such as CVX~\cite{grant2008cvx}.
\vspace{-3pt}\begin{lem}\label{lemma1}
The problem $\mathscr{P}_2$ with fixed transmit power and resolution selection is a non-convex problem with respect to $\{a_{k,n}\}$.
\end{lem}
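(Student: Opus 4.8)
The plan is to mirror the Hessian computation already carried out for Lemma~\ref{lemma2}, but now with respect to the association variables $\{a_{k,n}\}$, and to show that the resulting Hessian fails to be positive semidefinite. First I would isolate the only part of $F$ that carries curvature in $\{a_{k,n}\}$. With the transmit powers and resolutions held fixed, the uplink delay $L_k^u$ does not depend on the association at all, and the downlink delay enters only linearly through $\sum_{n\in\mathcal N} a_{k,n} D_k^d / R_{k,n}^d$; hence neither contributes to the Hessian. The sole source of second-order terms is the computation delay $L_k^p$, because substituting $c_n = f_n / \sum_{k'\in\mathcal K} a_{k',n}$ from~\eqref{C_n} turns it into $L_k^p = \sum_{n\in\mathcal N} a_{k,n}\,\beta_k \big(\sum_{k'\in\mathcal K} a_{k',n}\big)/f_n$, where I abbreviate $\beta_k := \lambda_u D_k^u + \lambda_d D_k^d > 0$. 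This is a genuine quadratic (product) form in the association variables.

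Next I would compute the Hessian $\mathbf H^2$ of $F$ with respect to $\{a_{k,n}\}$. Since $F$ picks up the term $\eta_l\omega\sum_{k} L_k^p$, differentiating twice produces mixed partials that vanish whenever the two variables sit on different servers, and that for a common server $n$ yield the $K\times K$ block $[\mathbf H^2_n]_{ij} = \eta_l\omega(\beta_i+\beta_j)/f_n$. Thus $\mathbf H^2$ is block-diagonal across servers, and convexity of $F$ reduces to positive semidefiniteness of each block $M_n$, which up to the positive scalar $\eta_l\omega/f_n$ equals $\boldsymbol\beta\mathbf 1^\top + \mathbf 1\boldsymbol\beta^\top$ with $\boldsymbol\beta = (\beta_1,\dots,\beta_K)^\top$.

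The crux is to exhibit that this rank-two matrix is indefinite. For any direction $\mathbf z$ one has $\mathbf z^\top M_n \mathbf z = 2\big(\sum_k z_k\beta_k\big)\big(\sum_k z_k\big)$, so it suffices to pick $\mathbf z$ with $\sum_k z_k > 0$ but $\sum_k z_k\beta_k < 0$ (possible as soon as the $\beta_k$ are not all equal, e.g.\ by assigning a large negative weight to the index with the largest $\beta_k$), which makes the quadratic form strictly negative. Equivalently, the two nonzero eigenvalues of $M_n$ are $\sum_k\beta_k \pm \sqrt{K}\,\|\boldsymbol\beta\|$, and the Cauchy--Schwarz inequality $\sum_k\beta_k \le \sqrt{K}\,\|\boldsymbol\beta\|$ forces the smaller one to be nonpositive, and strictly negative unless all $\beta_k$ coincide. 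Hence $\mathbf H^2$ has a negative eigenvalue and $F$ is not convex in $\{a_{k,n}\}$.

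The main obstacle I anticipate is conceptual rather than computational: I must be careful that the claim is about the curvature of the objective and not merely about the discreteness of the feasible set. Indeed, the binary constraint~\eqref{con:UA_1} already renders the feasible region non-convex, but the substantive content---and the reason standard convex solvers cannot be applied even after relaxing the integrality---is that the relaxed objective is itself non-convex through the coupling term $a_{k,n}\sum_{k'}a_{k',n}$. This indefinite quadratic structure is precisely what later motivates recasting the association subproblem as a QCQP and applying semidefinite relaxation.
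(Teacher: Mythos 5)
Your proof is correct and follows essentially the same route as the paper: both isolate the computation-delay term, compute the mixed second partials $\partial^2 F/\partial a_{i,j}\partial a_{m,n}=\eta_l\omega(\beta_i+\beta_m)/f_j$ for $j=n$ (zero otherwise, with $\beta_k=\lambda_u D_k^u+\lambda_d D_k^d$), and conclude non-convexity from the failure of positive semidefiniteness of this Hessian. The only substantive difference is that the paper simply asserts the Hessian is not positive definite and defers to a citation, whereas you actually establish indefiniteness via the rank-two block structure $\boldsymbol{\beta}\mathbf{1}^{\top}+\mathbf{1}\boldsymbol{\beta}^{\top}$ and Cauchy--Schwarz; this is a genuine strengthening, and it also exposes a degenerate case the paper glosses over, namely that when all $\beta_k$ coincide each block reduces to $2\beta\,\mathbf{1}\mathbf{1}^{\top}\succeq 0$ and the relaxed objective is in fact convex, so the lemma implicitly assumes heterogeneous users.
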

\begin{proof}
For any $a_{i,j},a_{m,n}, \forall i,m \in \mathcal{K}, j,n \in \mathcal{N}$,
\vspace{-5pt}\begin{align}
    &\frac{\partial^2 F}{\partial a_{i,j}\partial a_{m,n}} = \nonumber\\&\left\{
    \begin{aligned}
        &\frac{\eta_l \omega \left[\lambda_u (D_i^u+D_m^u)+ \lambda_d(D_i^d+D_m^d)\right]}{f_j},&j = n,\\
        &\quad\quad\quad\quad\quad\quad\quad~~0,&\!\!\!\!\!\!\!\!\text{otherwise.}
    \end{aligned}
    \right.\\[-18pt]\nonumber
\end{align}
The Hessian matrix given by $\mathbf{H}_{ij,mn}^2 = \frac{\partial^2 F}{\partial a_{i,j}\partial a_{m,n}}$ is not symmetric
and positive definite~\cite{liu2018edge}. 
Thus, the mixed-integer non-linear programming problem is a non-convex problem, which is considered to be NP-hard generally~\cite{sheng2016intelligent}.
The quadratic objective function also makes it difficult to tackle.
\end{proof}\vspace{-2pt}
Next, we propose using alternating optimization to obtain the user-to-server association $\{a_{k,n}\}$ and the resolution $\{s_k\}$. Specifically, we optimize $\{a_{k,n}\}$ and $\{s_k\}$ iteratively.
Without loss of equivalence, the integer constraint (\ref{con:UA_1}) can be reformulated as:
\vspace{-5pt}\begin{align}
    &0 \leq a_{k,n} \leq 1, \forall k\in \mathcal{K},\forall n\in \mathcal{N}, \label{con:UA_3}\\
    &\sum_{k \in \mathcal{K}} \sum_{n \in \mathcal{N}} a_{k,n}(1-a_{k,n}) \leq 0. \label{con:UA_4}\\[-15pt]\nonumber
\end{align}
Thus, problem $\mathscr{P}_2$ with fixed resolution selection and transmit power is given as:
\vspace{-8pt}\begin{align}
    \mathscr{P}_3: &\min_{\{a_{k,n}\}} - \sum_{k \in \mathcal{K} }U_k,\label{obj_P_3} \\
    \text{s.t.}~&\text{(\ref{con:UA_2}),~(\ref{con:UA_3}),~(\ref{con:UA_4}).}\\[-21pt]\nonumber
\end{align}
According to Lemma \ref{lemma1}, problem $\mathscr{P}_3$ is non-convex. In order to solve the non-convex problem, we first transform it into a quadratically constrained quadratic programming (QCQP) problem. Then, the SDR approach is applied to obtain an approximate solution to this problem.
\subsubsection{QCQP Transformation}\label{subsection:QCQP_transformation}
We introduce a new column vector $\boldsymbol{a}$ with $M = K\times N$ elements:
\vspace{-5pt}\begin{align}
    \boldsymbol{a} \!= \!\left[a_{1,1},a_{1,2},\ldots,a_{1,n},a_{2,1},\ldots,a_{2,n},a_{k,1},a_{k,2},\ldots,a_{k,n}\right]^T.\\[-21pt]\nonumber
\end{align}
With fixed downlink video resolution $\{s_k\}$, the earnings and the transmission delay in the objective function (\ref{pro_min_u}) are constants. We remove the constant part, and then the objective function (\ref{obj_P_3}) in problem $\mathscr{P}_3$ is simplified into:
\begin{align}
    \min_{\{a_{k,n}\}}& \eta_l \omega \sum_{k \in \mathcal{K} } \sum_{n \in \mathcal{N}} ( a_{k,n} \frac{\lambda_u D_k^u + \lambda_d D_k^d}{c_n} ). \label{pro:a}\\[-21pt]\nonumber
\end{align}
Substitute (\ref{C_n}) into (\ref{pro:a}), we can get:
\vspace{-5pt}\begin{align}
    \min_{\{a_{k,n}\}}& \eta_l \omega \!\!\sum_{k \in \mathcal{K} } \!\sum_{n \in \mathcal{N}} ( a_{k,n} \frac{(\lambda_u D_k^u \!+\! \lambda_d D_k^d)\sum_{k_2 \in \mathcal{K}}a_{k_2,n}}{f_n} ).\label{pro:a2}\\[-21pt]\nonumber
\end{align}
Then, we replace user-to-server association $\{a_{k,n}\}$ with vector $\boldsymbol{a}$ in (\ref{pro:a2}). Before this, we define matrices $\mathbf{P}$ and $\mathbf{Q}$ as 
\begin{align}
        \mathbf{Q}&= \left.\left[
            \begin{array}{ccccc}
              \boldsymbol{e}_1 & \mathbf{0}&   \cdots &  \mathbf{0} \\
                \mathbf{0}  & \boldsymbol{e}_1&   \cdots &  \mathbf{0}  \\
              \vdots &  \vdots & \ddots &  \vdots  \\
                \mathbf{0} &  \mathbf{0}&   \cdots & \boldsymbol{e}_1  \\
             \end{array}
         \right]\right\}k,
\\
         \mathbf{P}&= \left.\left[
        \begin{array}{ccccc}
          \mathbf{J}_1    & \mathbf{J}_2    &  \cdots & \mathbf{J}_k    \\
          \mathbf{J}_1    & \mathbf{J}_2    &  \cdots & \mathbf{J}_k    \\
          \vdots &  \vdots & \ddots & \vdots \\
          \mathbf{J}_1    & \mathbf{J}_2    &  \cdots & \mathbf{J}_k    \\
         \end{array}
     \right]\right\}k,
\end{align}
where $\boldsymbol{e}_1$ and $\mathbf{J}_k$ are given by
\vspace{-5pt}\begin{small}
\begin{align}
    &\boldsymbol{e}_1~~ =~~ {\overbrace {\left[1,1,\ldots,1\right]}^n},\\
      &\mathbf{J}_k= \nonumber\\&\left[\! 
        \begin{array}{ccccc}
        \frac{\lambda_u D_k^u + \lambda_d D_k^d}{f_1} & 0 & 0 & \cdots  & 0   \\
        0 & \frac{\lambda_u D_k^u + \lambda_d D_k^d}{f_2} & 0 & \cdots  & 0   \\
        \vdots &  \vdots & \vdots  &\ddots & \vdots  \\
        0 & 0 & 0 & \cdots  & \frac{\lambda_u D_k^u + \lambda_d D_k^d}{f_n}\\
         \end{array}
     \! \right]_{n \times n}.
\end{align}
\end{small}
Let $\boldsymbol{q}_j$ denote the $j$-th row of $\mathbf{Q}$. 
Then, problem $\mathscr{P}_3$ can be rewritten as:
\vspace{-5pt}\begin{align}
    \mathscr{P}_{3.1}: \min_{\boldsymbol{a}} ~& \eta_l \omega \boldsymbol{a}^T \mathbf{P}\boldsymbol{a},\\
    \text{s.t.~}~
    &\boldsymbol{q}_j \boldsymbol{a} = 1, \forall j \in \{1,2,\ldots,k\}, \label{con:c5}\\
    &\text{(\ref{con:UA_3}),~(\ref{con:UA_4}).}\nonumber\\[-21pt]\nonumber
\end{align}
Note that the constraint (\ref{con:c5}) is equivalent to (\ref{con:UA_2}).
The constraint $a_{k,n} \leq 1$ can be removed in (\ref{con:UA_3}), since constraint (\ref{con:c5}) ensures that $\sum_{n \in \mathcal{N}} a_{k,n} = 1$. Based on this, problem $\mathscr{P}_{3.1}$ is further transformed as an equivalent problem:
\vspace{-5pt}\begin{subequations}\label{pro:3.2}
\begin{align}
    \mathscr{P}_{3.2}: \min_{\boldsymbol{a}} ~& \eta_l \omega \boldsymbol{a}^T \mathbf{P}\boldsymbol{a},\tag{\ref{pro:3.2}}\\
    \text{s.t.~}~
    &a_{k,n} \geq 0, \forall k\in \mathcal{K},\forall n\in \mathcal{N}, \label{con:c6}\\
    &\text{(\ref{con:UA_4}),~(\ref{con:c5}).} \nonumber\\[-21pt]\nonumber
\end{align}
\end{subequations}
Defining $\boldsymbol{b} \triangleq$
\begin{math}
    \left[
        \begin{matrix}
        \boldsymbol{a} \\
         1
        \end{matrix}
    \right]
\end{math}
and $\mathbf{P}_1 \triangleq$
\begin{math}
    \left[
        \begin{matrix}
        \mathbf{P}_{M \times M} & \mathbf{0}_{M \times 1} \\
         \mathbf{0}_{1 \times M} & 0
        \end{matrix}
    \right]
\end{math}, problem $\mathscr{P}_{3.2}$ is then homogenized to:
\begin{subequations}\label{pro:4}
\begin{align}
    \mathscr{P}_4:&\min_{\boldsymbol{b}} ~ \eta_l \omega \boldsymbol{b}^T \mathbf{P}_1\boldsymbol{b},\tag{\ref{pro:4}}\\
    \text{s.t.~}~ &b_i \geq 0, \forall i \in \{1,2,\ldots,M\},\label{con:c7}\\
    &\boldsymbol{b}^T \mathbf{Y} \boldsymbol{b} \leq 0,\label{con:UA_5}\\
    &\boldsymbol{b}^T \mathbf{G}_j\boldsymbol{b} = 1, \forall j \in \{1,2,\ldots,k\},\label{con:c8} \\
    &b_{M+1}=1,\label{con:c9}
\end{align}
\end{subequations}
where $b_{M+1}$ is the $(M+1)$-th element of vector $\boldsymbol{b}$.
$\mathbf{Y}$ and $\mathbf{G}_j$ are given by
\begin{align}
    \mathbf{Y} &=         \left[
        \begin{array}{cc}
        -\mathbf{I}_{M \times M} & \frac{1}{2}\boldsymbol{e}_2^T \\
         \frac{1}{2}\boldsymbol{e}_2 & 0
        \end{array}
    \right]_{(M+1) \times (M+1)},\\
    \mathbf{G}_j &= 
        \left[
        \begin{array}{cc}
        \mathbf{0}_{M \times M} & \frac{1}{2}\boldsymbol{q}_j^T \\
         \frac{1}{2}\boldsymbol{q}_j & 0
        \end{array}
    \right]_{(M+1) \times (M+1)},
\end{align}
where $\mathbf{I}_{M \times M}$ is a $M\times M$ identity matrix and $\boldsymbol{e}_2 = {\overbrace {\left[1,1,\ldots,1\right]}^M}$.
In problem $\mathscr{P}_4$, 
constraints (\ref{con:c7}), (\ref{con:UA_5}) and (\ref{con:c8}) are equivalent to constraints (\ref{con:c6}), (\ref{con:UA_4}) and (\ref{con:c5}), respectively. 
\subsubsection{Semidefinite Relaxation (SDR)}\label{subsection:SDR}
Problem $\mathscr{P}_4$ is still non-convex and difficult to solve. To find a solution to problem $\mathscr{P}_4$, we use the semidefinite programming (SDP) approach, which relaxes the problem into an SDR problem.

First, we define $\mathbf{B} \triangleq \boldsymbol{b}\boldsymbol{b}^T$, then $\boldsymbol{b}^T \mathbf{P}_1\boldsymbol{b} = \Tr (\mathbf{P}_1 \mathbf{B})$, $\boldsymbol{b}^T \mathbf{Y} \boldsymbol{b} = \Tr(\mathbf{Y}\mathbf{B})$, $\boldsymbol{b}^T \mathbf{G}_i\boldsymbol{b} = \Tr(\mathbf{G}_i \mathbf{B})$ and rank$(\mathbf{B})=1$. Problem $\mathscr{P}_4$ is equivalent to
\begin{subequations}\label{pro:4.1}
\begin{align}
    \mathscr{P}_{5}:&\min_{\mathbf{B}}~\Tr(\mathbf{P}_1 \mathbf{B})\tag{\ref{pro:4.1}}\\
    \text{s.t.}~&\mathbf{B}_{i,j} \geq 0, \forall i,j \in \{1,2,\ldots,M\},\label{con:c10}\\
    &\Tr(\mathbf{Y}\mathbf{B}) \leq 0,\label{con:UA_6}\\
    &\Tr(\mathbf{G}_j \mathbf{B}) = 1,\forall j \in \{1,2,\ldots,k\},\label{con:c11}\\
    &\mathbf{B}_{M+1,M+1} = 1,\label{con:c12}\\
    &\mathbf{B} \succeq 0, \label{con:c13}\\
    &\text{rank}(\mathbf{B})=1,\label{con:c14}
\end{align}
\end{subequations}
where $\mathbf{B}_{i,j}$ is the element at row $i$ and column $j$ of matrix $\mathbf{B}$. 
Constraint (\ref{con:c10}) ensures that all the elements are non-negative. Thus, all the elements in $\boldsymbol{b}$ have the same sign. 
Together with constraint (\ref{con:c11}), it makes sure all the elements in $\boldsymbol{b}$ are non-negative, which is equivalent to constraints (\ref{con:c7}). 
Constraints (\ref{con:UA_6}), (\ref{con:c11}) and (\ref{con:c12}) are equivalent to constraints (\ref{con:UA_5}), (\ref{con:c8}) and (\ref{con:c9}), respectively. 
In problem $\mathscr{P}_{5}$, the only difficulty is the non-convex constraint (\ref{con:c14}), whereas other constraints as well as the objective function are convex. 
By dropping constraint (\ref{con:c14}), we obtain an SDR of problem $\mathscr{P}_{5}$:
\begin{align}
    &\mathscr{P}_{5.1}:\min_{\mathbf{B}}~\Tr(\mathbf{P}_1 \mathbf{B})\\
    &\text{s.t.}~\text{(\ref{con:c10}),~(\ref{con:UA_6}),~(\ref{con:c11}),~(\ref{con:c12}),~(\ref{con:c13}).} \nonumber
\end{align}
Until now, Problem $\mathscr{P}_{5.1}$ becomes a convex optimization problem, which can be solved in polynomial time through convex optimization solvers such as CVX~\cite{grant2008cvx}. 

\subsubsection{Randomization} \label{subsection:Randomization}
The semidefinite relaxation in problem $\mathscr{P}_{5.1}$ provides a lower bound on the optimal value of problem $\mathscr{P}_{5}$.
However, it is still not clear how to compute a good feasible solution to $\mathbf{B}$.
Next, we employ a randomization technique to obtain an approximate solution that is not guaranteed to be exact but is computationally efficient.

Let $\mathbf{B}^*$ be the optimal solution to problem $\mathscr{P}_{5.1}$.
If $\text{rank}(\mathbf{B}^*)=1$, then the optimal solution to problem
$\mathscr{P}_4$ can be constructed by $\mathbf{B}^* = \boldsymbol{b}^*\boldsymbol{b}^{*T}$. 
\begin{algorithm}[t] 
\caption{User utility maximization algorithm} \label{algorithm1}
    \textbf{Input:} Weight $\omega$ and the initial downlink video resolution $\{s_k\}$.\\
    \textbf{Output:} The transmit power $\{p_k\}$, the user-to-server association $\{a_{i,j}\}$ and the downlink video resolution $\{s_k\}$.\vspace{-2pt}
    \begin{algorithmic}[1]
        \STATE {Find the optimal transmit power $p_k$ based on (\ref{solution_pk}).}
        \REPEAT
            \STATE {$\{a_{k,n}\} \gets$ solve problem $\mathscr{P}_2$ under fixed $\{s_k\}$ according to section \ref{subsection:QCQP_transformation}, section \ref{subsection:SDR} and section \ref{subsection:Randomization};}\vspace{-12pt}
            \STATE {$\{s_k\} \gets$ solve problem $\mathscr{P}_2$ under fixed $\{a_{k,n}\}$ according to lemma \ref{lemma2};}
        \UNTIL{convergence}
    \RETURN $\{p_k\}, \{a_{i,j}\}, \{s_k\}$
    \end{algorithmic}
\end{algorithm}
On the other hand, if the rank of $\mathbf{B}^*$ is not 1, it indicates the optimal objective value of problem $\mathscr{P}_{5.1}$ only serves as a lower bound of problem $\mathscr{P}_{5}$.
In this case, we present a Gaussian randomization-based approach for obtaining an approximate solution.
First, we apply eigen-decomposition to $\mathbf{B}^*$ as $\mathbf{B}^*=\mathbf{U\Lambda U}^H$, where $\mathbf{U}$ a unitary matrix and $\mathbf{\Lambda}$ is a diagonal matrix. After that, Gaussian Randomization is applied to obtain a quasi-optimal solution. 
Specifically, we generate $\boldsymbol{r} \in \mathbb{R}^{(M+1)\times1}$ such where the $j$-th element $r _j\sim N(0,1), j \in \{1,2,\ldots,M+1\}$ and $N(0,1)$ here denotes standard Gaussian distribution with a mean of zero and standard deviation of 1. 
The Gaussian randomization generates $l$ random samples. Let the superscript $(l)$ denote the index of a random sample. Then we obtain candidate vector by $\bar{\boldsymbol{b}}^{(l)}=\mathbf{U\Lambda^{1/2}}\boldsymbol{r}^{(l)}$
The final estimate is the one that corresponds to the minimum objective value among all the candidate vectors. Finally, we obtain $\boldsymbol{a}$ by removing the $(M+1)$-th element of $\boldsymbol{b}$. Then, $\{a_{k,n}\}$ can be constructed by $\boldsymbol{a}$.

After the user-to-server association $\{a_{k,n}\}$ is obtained, the resolution $\{s_k\}$ is optimized based on the fixed $\{a_{k,n}\}$. The alternating optimization algorithm optimizes $\{s_k\}$ and $\{a_{k,n}\}$ step by step until the objective function is converged. 
The proposed algorithm is introduced in Algorithm \ref{algorithm1}.

SDP can be handled by most convex optimization toolboxes with an interior point algorithm. 
Thus, the worst case complexity of solving problem $\mathscr{P}_{5.1}$ is $\mathcal{O}((M+1)^{\frac{9}{2}}\log(\frac{1}{\epsilon}))$ given a solution accuracy $\epsilon$~\cite{luo2010semidefinite}.


\section{Evaluation of the \textit{Play-to-Earn} Earning Function} \label{sec-function}
\begin{figure*}[htb]
\vspace{-10pt}
\centering
\includegraphics[width=0.85\textwidth]{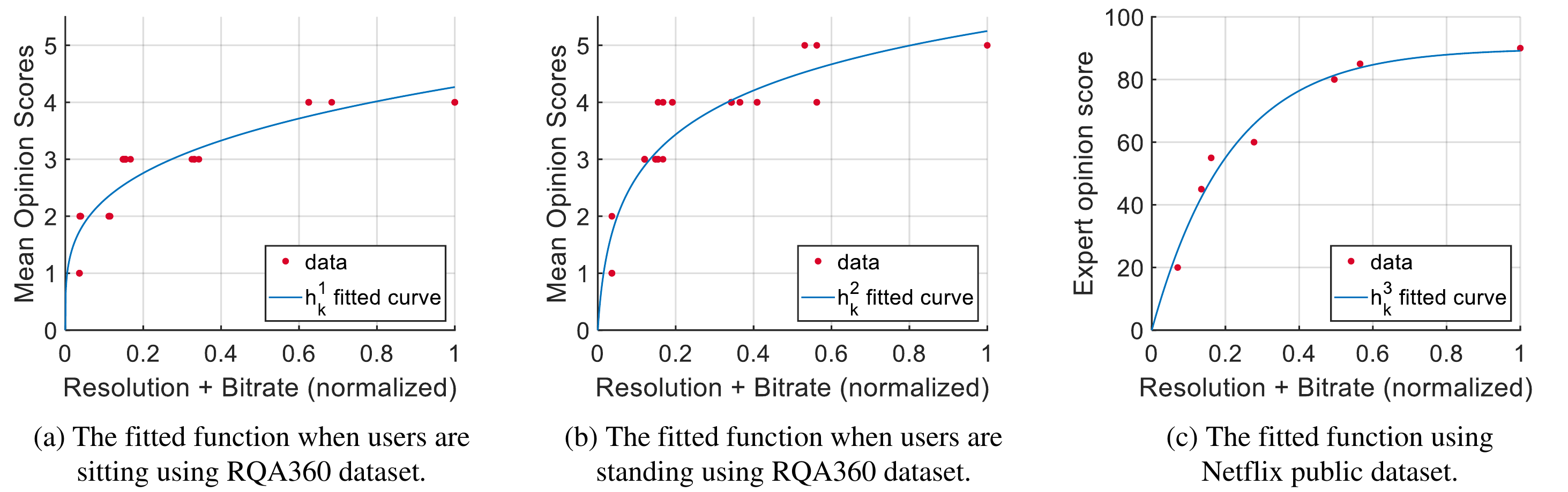} 
\vspace{-5pt}\caption{Three different types of earning functions.}
\label{fig_fitted}\vspace{-17pt}
\end{figure*}

In this section, we investigate the earning functions $g_k(s_k)$ in (\ref{g_(x)}) by analyzing the RQA360 dataset~\cite{elwardy2021opinion} and Netflix public dataset~\cite{Netflix_dataset}.

\textbf{RQA360 dataset.} The RQA360 dataset is a dataset on subjective quality assessment of 360$^\circ$ videos.
The participants view 360$^\circ$ video stimuli on an HTC Vive device equipped with an integrated eye tracker.
Then, the participants are tasked with assessing the quality of a set of videos by assigning scores based on their opinions in two separate sessions. 
One session requires participants to view the videos while seated, while the other session requires participants to view the videos while standing.
Then the scores are measured by Mean Opinion Scores (MOS), ranging from 0 to 5.

\textbf{Netflix public dataset.} It is a dataset provided by Netflix publicly available to the community.
Each 2D video is evaluated by a professional assessor who assigns an Expert Opinion Score (EOS) ranging from 0 to 100. 

Intuitively, there should be a proportionate relationship between the players' profitability and MOS/EOS in P2E games for the following reasons:
\begin{itemize}[leftmargin=*]
\item \textbf{There is a positive correlation between profitability and player engagement.}
In Metaverse games, MOS/EOS is a measure of user satisfaction, which is strongly correlated with user engagement in the game.
A higher MOS/EOS usually indicates smoother, higher-definition graphics. It increases the visibility of the player, thereby improving the engagement of players.
A higher engagement level means more time spent on the game, which in turn can lead to higher profitability for the player.

\item \textbf{Higher MOS/EOS indicates better game design, bringing higher profitability.}
A high MOS/EOS indicates that the game is well-designed, with intuitive mechanics, enjoyable gameplay, and an engaging environment. 
These factors can lead to higher player retention and a better reputation for the game, which can attract more players and ultimately result in higher profitability for the players.

\item \textbf{MOS/EOS reflects the willingness of players to spend.}
A high MOS/EOS suggests that players are more enjoyable and willing to spend time and money on the game.
This willingness to spend can ultimately turn into higher profitability for the players, especially in P2E games where players are rewarded for their time and effort.
\end{itemize}
Therefore, we define $g_k(s_k) = \tau_k h_k(s_k)$ where $h_k(s_k)$ indicates the players' experience and $\tau_k$ is a measure of player $k$'s earning ability. In the P2E games, the rewards are tied to player performance. A player who is more skilled at the game can earn higher rewards and profitability.
Next, we study how the resolution $s_k$ impacts users' MOS/EOS.
In order to address the limited data available for videos with the same bitrate, we consider both resolution and bitrate as variables by combining them.

We use three different commonly used utility functions to estimate the users' MOS when they are in different postures:
\begin{align}
    h_k^1(s_k) &= \alpha_1 (s_k+R_{k,n}^d)^{\beta_1},\label{earning_func1}\\h_k^2(s_k) &= \alpha_2 \ln[1+ \beta_2 (s_k+R_{k,n}^d)], \label{earning_func2}\\
    h_k^3(s_k) &= \alpha_3 [1-e^{-\beta_3 (s_k+R_{k,n}^d)}],\label{earning_func3}
\end{align}\vspace{-3pt}
where $\alpha_1, \alpha_2,\alpha_3, \beta_2,\beta_3 \ge 0$ and $0\le\beta_1\le1\vspace{+2pt}$.
We normalize resolution $s_k$ and downlink bitrate $R_{k,n}$ to the range of 0 to 0.5, respectively.
Thus, the sum of $s_k$ and $R_{k,n}$ is normalized to 0 to 1.
The function $h_k^1$ was introduced by Jiang \textit{et al.}~\cite{jiang2005max} to reflect the system's utility in wireless networks.
Yang \textit{et al.}~\cite{yang2015incentive} proposed function $h_k^2$ to capture the crowdsourcer's dwindling return in crowdsensing system.
Liu \textit{et al.}
~\cite{liu2018edge} proposed function $h_k^3$ to describe the relationship between the analytics accuracy and the video frame resolution in MEC.
$h_k^1$, $h_k^2$ and $h_k^3$ are all increasing, twice differentiable and concave.
Thus, they follow Assumption \ref{assumption:earning_function}.
It is worth noting that the proposed algorithm and results of this paper can be extended to any other earning functions satisfying Assumption \ref{assumption:earning_function}.
\begin{table}[tp]
\centering
\vspace{+5pt}
\caption{Parameter of the earning functions.}
\label{table_earning_functions}
\vspace{-5pt}
\setlength{\tabcolsep}{3mm}{
\begin{tabular}{cccc}
\toprule
\textbf{Parameter}& \textbf{Value}&\textbf{Parameter}& \textbf{Value}\\
\hline
$\alpha_1$&  4.268 & $\beta_1$ & 0.2714 \\\hline
$\alpha_2$&  1.159  & $\beta_2$ & 91.92 \\\hline
$\alpha_3$&  89.95  & $\beta_3$ & 4.732 \\
\bottomrule
\end{tabular}
}\vspace{-15pt}
\end{table}
Figure \ref{fig_fitted} (a) and Figure \ref{fig_fitted} (b) present the fitted earning functions $h_k^1$ and $h_k^2$ when users are in different postures using RQA360 dataset.
Figure \ref{fig_fitted} (c) shows the fitted function $h_k^3$ using Netflix dataset.
It can be observed from the data that the MOS grows more rapidly when users stand as opposed to when they sit.
Thus, we use $h_k^1$ and $h_k^2$ to fit the different situations.
The difference can be attributed to the fact that standing provides more space for users to perform physical movements, which enables them to achieve better in-game experience.
For 2D videos, the user experience grows much more slowly.
Thus, we use $h_k^3$ to fit Netflix dataset.
Table~\ref{table_earning_functions} summarizes the fitting parameters of the three different functions.


\vspace{-5pt}\section{Simulation Results} \label{sec-Simulation}
In this section, we evaluate the performance of the proposed method. We simulate a network with 20 MSP edge servers and 100 Metaverse users.
For the earning function, we choose them randomly with normalized MOS/EOS.
The parameters regarding the earning functions are presented in Table \ref{table_earning_functions}.
The computational recourses $f_n$ of MSP edge servers are chosen randomly from 1 to 5 TFLOPS.
The computational complexity per unit of data $\lambda_u$ is uniformly distributed between 1 to 10 KFLOPS while $\lambda_d$ is uniformly distributed between 1 to 100 KFLOPS per pixel.
The compression ratios $Com_k$ is selected from the uniform distribution $[300,600]$~\cite{bastug2017toward}.
The path loss is modeled as $128.1 + 37.6\log(\texttt{distance})$ and Gaussian noise power is $\sigma^2=-134\text{dBm}$.
The maximum transmit power $p_{max}$ is 0.2 W.
The downlink transmission rates are uniformly distributed between 10 to 20 Mbps.
The number of Gaussian randomization $l$ is set to be 1000.
Firstly, we validate our method by comparing the latency with respect to varying weight parameter $\omega$. 
The value of $\omega$ describes the weight of the influence of latency on the system. Next, we evaluate the average earnings of users under different minimum video resolutions. 
The amount of average earnings is scaled to $[0,1]$. We compare the proposed method with three kinds of algorithms summarized as follows:




\begin{itemize}[leftmargin=*]
    \item \textbf{Optimal-latency:} The optimal-latency method aims at minimizing the total latency of the system, regardless of the amount of earnings that MAR users may earn.
    \item \textbf{Optimal-earnings:} The optimal-earnings method optimizes the downlink video resolution to maximize the earnings without considering the latency. It adopts random user-to-server association. 
    \item \textbf{Baseline:} The baseline method picks random downlink video resolution and random user-to-server association.
\end{itemize}
\vspace{-5pt}\subsection{Comparison under different weight parameter}
In this sub-section, we evaluate the latency and earnings given different weight parameter $\omega$. The value of $\omega$ indicates how latency impacts the system. 
A higher value of $\omega$ puts more emphasis on latency. In the evaluation of latency, we set the weight parameter $\omega$ from 0.5 to 5 as shown in Figure \ref{fig_latency_w}. 
It can be seen that the proposed method achieves lower latency with the increase of $\omega$, which means the system emphasizes more on the latency requirement. In the evaluation of earnings, we set the weight parameter $\omega$ from 0.5 to 5. Figure \ref{fig_earnings_w} shows the impact of $\omega$ on the earnings. With the increase of $\omega$, our method gets lower earnings.


\begin{figure}[tb]
\centering
\includegraphics[width=0.31\textwidth]{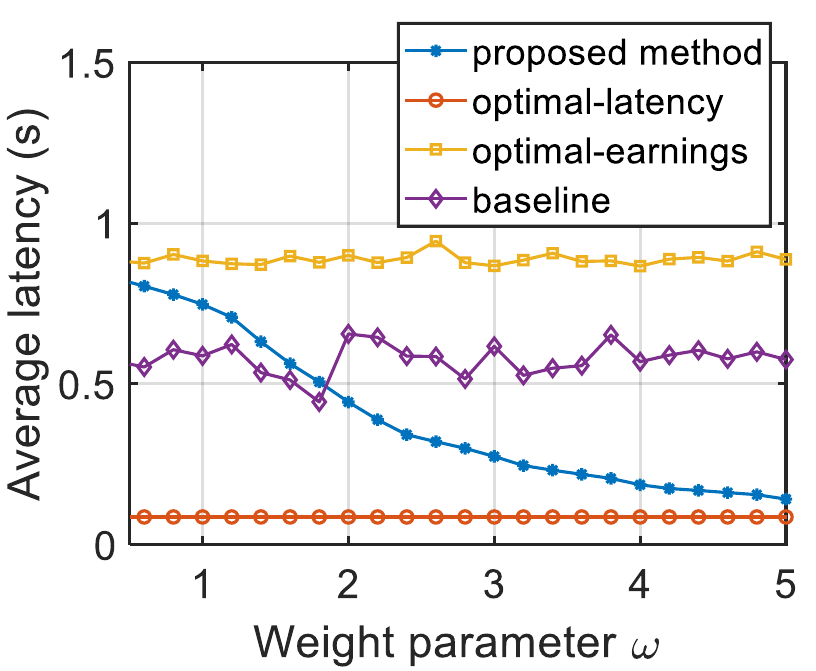} 
\vspace{-5pt}\caption{The latency under different weight $\omega$.}
\label{fig_latency_w}\vspace{-18pt}
\end{figure}

\begin{figure}[tb]
\centering
\includegraphics[width=0.31\textwidth]{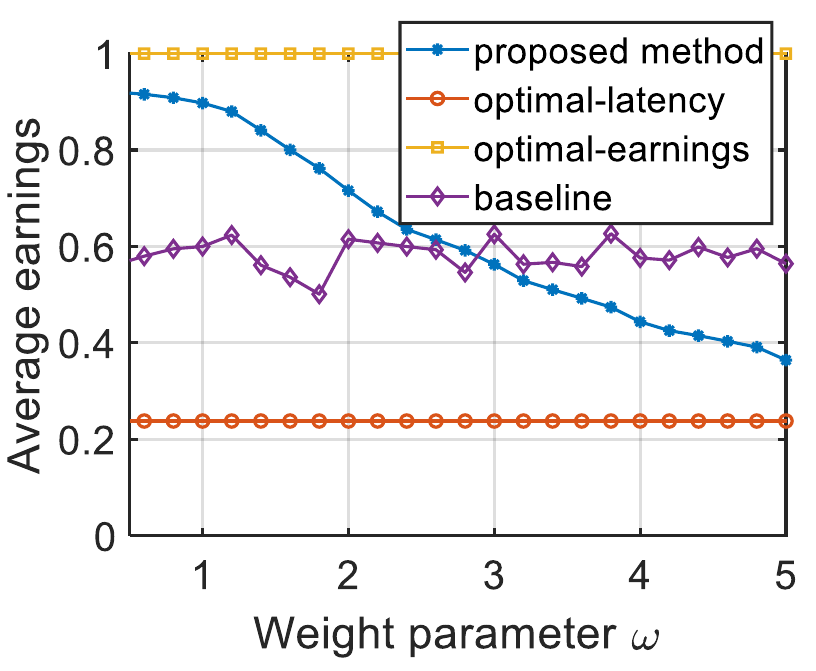} 
\vspace{-6pt}\caption{The earnings under different weight $\omega$.}
\label{fig_earnings_w}\vspace{-15pt}
\end{figure}

The lower earnings obtained by our method is due to its compromise of earnings for latency. The optimal-latency method keeps low latency all the time since it only aims at minimizing the latency. 
Thus, it chooses minimum downlink video resolution all the time, and also obtains low earnings which can be seen in Figure \ref{fig_earnings_w}. 
For the optimal-earnings method, as illustrated in Figure \ref{fig_latency_w}, the latency is relatively high because of the higher downlink video resolution it chooses. At the same time, it has the highest earnings among all the methods. 
For the baseline, the user-to-server association is random. Thus the latency fluctuates between the optimal-earnings method and the optimal-latency method. 
It can be observed from Figure \ref{fig_latency_w} that our method has similar latency with baseline when $\omega = 1.5$ while achieving higher earnings than baseline, which can be derived from Figure \ref{fig_earnings_w}. 
The experiments indicate that when choosing appropriate $\omega$ (e.g., $\omega = 2.75$), the proposed method maintains similar earnings but reduces about 40\% system latency compared with the baseline. 
When increasing $\omega$ (e.g., $\omega \geq 5$), the proposed method achieves almost similar latency performance with optimal-latency method but improves the earnings substantially. 

\begin{figure}[tb]
\centering
\includegraphics[width=0.31\textwidth]{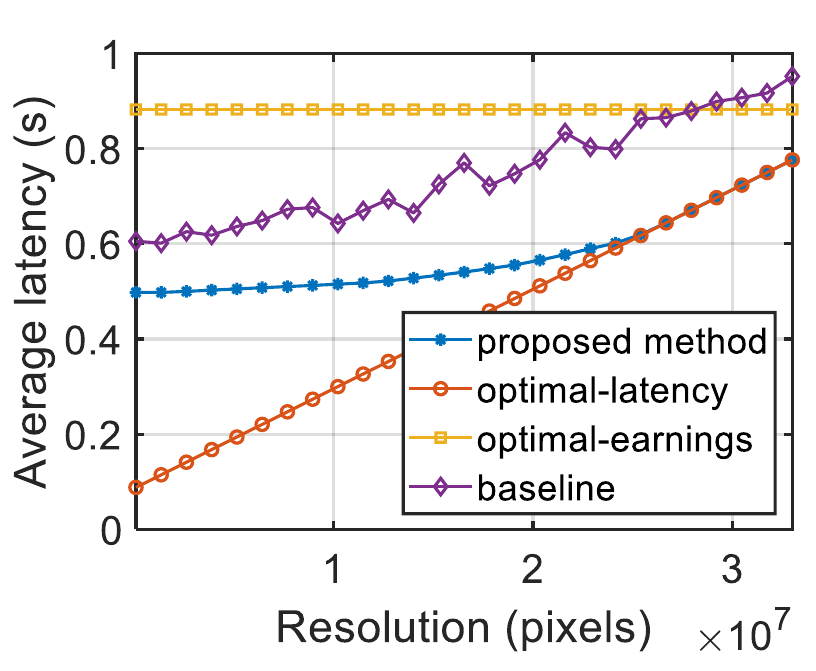}
\vspace{-6pt}\caption{The latency under different minimum video solution $s_{min}$.}
\label{fig_latency_D}\vspace{-15pt}
\end{figure}

\vspace{-5pt}\subsection{Comparison under different minimum downlink video resolution}
In Metaverse-based MAR system, minimum downlink video resolution requirements ensure that Metaverse users are eligible to participate in the P2E games. Next, we evaluate the latency and earnings given different $s_{min}$. 
A higher downlink video resolution will result in more earnings for the users, while incurring more transmission and computation latency, thus degrading user utility.
In the simulation, we set the minimum downlink video resolution $s_{min}$ from $1280\times720$ pixels (720p) to $6400\times4800$ pixels (nearly 8k). Figure \ref{fig_latency_D} shows the latency with respect to different $s_{min}$. From Figure \ref{fig_latency_D} and Figure \ref{fig_earnings_D}, it can be found that the latency and earnings of proposed method almost remain unchanged when the minimum downlink video resolution requirement is less than $1.5\times10^7$ pixels. 
This is because the downlink video resolution for most Metaverse users is optimized. 
The optimal-latency method aims at minimizing the overall latency regardless of players' earnings. 
Therefore, it picks the minimum downlink video resolution all the time. It can be seen from Figure \ref{fig_latency_D} that the optimal-latency method always has the lowest latency. 
For the optimal-earnings method as shown in Figure \ref{fig_earnings_D}, it always achieves the highest earnings, while maintaining a high latency. For the baseline, it picks random user-to-server association and downlink video resolution.

\begin{figure}[tb]
\centering
\includegraphics[width=0.31\textwidth]{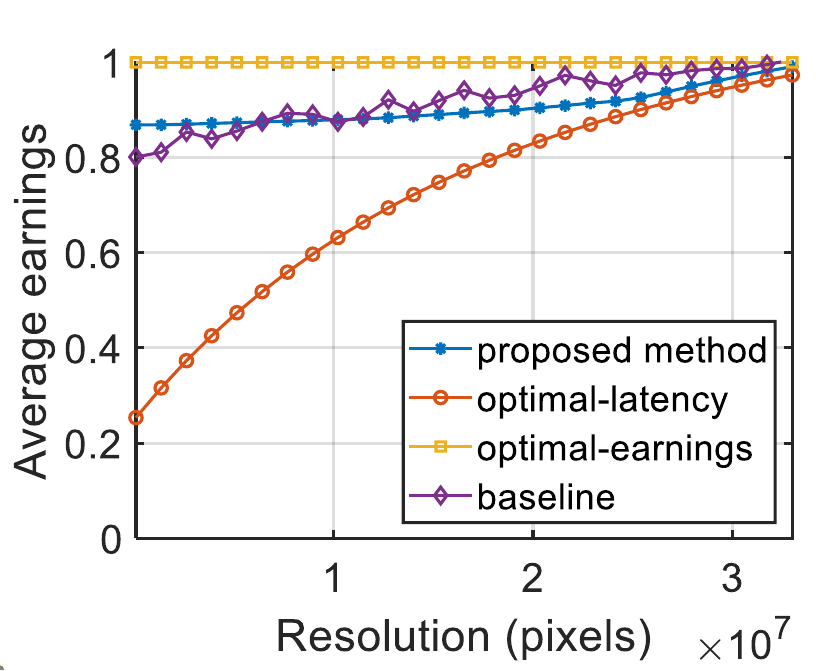} 
\vspace{-6pt}\caption{The earnings under different minimum video solution $s_{min}$.}
\label{fig_earnings_D}\vspace{-14pt}
\end{figure}

Figure \ref{fig_latency_D} and Figure \ref{fig_earnings_D} show that the proposed method obtains a low latency at the cost of earnings. However, compared with the baseline, the proposed method reduces about 30\% latency while maintaining similar earnings when the minimum downlink video resolution is $2\times10^7$ pixels. 
It indicates that our method is able to reduce the latency while bringing more profitability to users, thus improving user utility.

\vspace{-3pt}\subsection{The impact of number of the Metaverse users}
We next assess the influence of the number of users in Figure~\ref{fig_bar}.
As the number of users increases, the workload of MSP edge servers also increases, leading to higher computational latency. 
The proposed algorithm attains the lowest latency compared to the other methods. 
The performance gap between the proposed approach and the three other algorithms demonstrates the gains acquired from the optimal server allocation.
\begin{figure}[tb]
\centering
\includegraphics[width=0.43\textwidth]{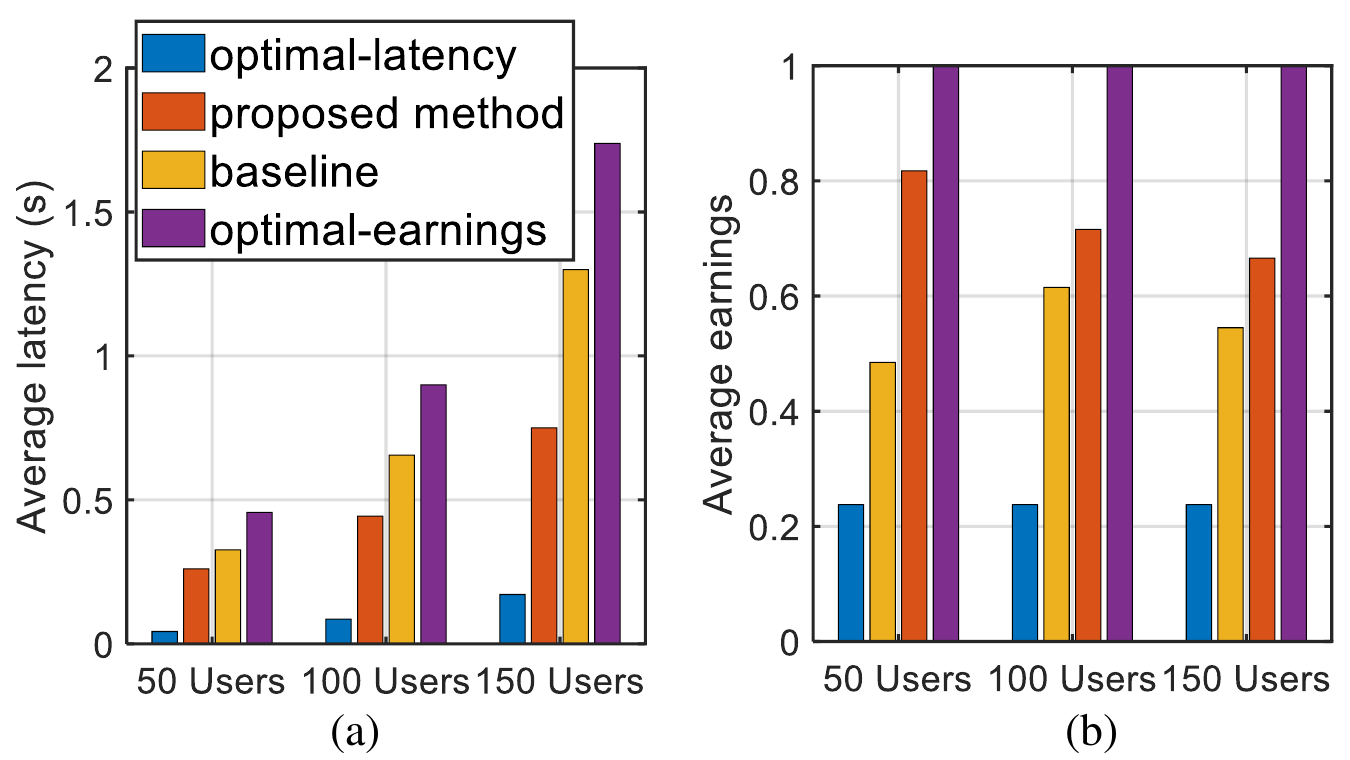} 
\vspace{-5pt}\caption{The impact of the number of the Metaverse users.}
\label{fig_bar}\vspace{-17pt}
\end{figure}
Compared to the baseline algorithm, the proposed algorithm exhibits a significant improvement in service latency, up to 30\% reduction, with a slight increase of 10\% in average earnings when the user count is 100. As the number of users increases, the MSP edge servers become overloaded, causing an increase in computational load and service latency. To address this issue, the proposed algorithm adopts a proactive approach of reducing the downlink video resolution to ensure low service latency, which, in turn, leads to a slight reduction in earnings. The proposed algorithm outperforms other algorithms in terms of both latency and earnings performance.

\section{Conclusion} \label{sec-Conclusion}\vspace{-1pt}

In this paper, we proposed the Metaverse system over MEC. We explained the profitability of playing games in the virtual world. To solve the trade-off between latency and potential token earnings, we formulated a multi-objective optimization problem. 
The semidefinite relaxation (SDR) approach is applied to get an approximate solution by transforming the problem into a quadratically constrained quadratic programming (QCQP) problem.
Then the optimal user-to-server association and downlink video resolution are found through alternating optimization.
The simulation results demonstrated our method can effectively balance the latency and players' profitability, thus improving the performance of the Metaverse systems over MEC.

\appendix
\section{Proof of Proposition 1}\label{Appendix:1}
First, the solution (\ref{p_Lambert}) to $\hat{p_k}$ can be derived from the following equality:
\vspace{-8pt}\begin{align}
    \frac{\hat{p_k}D_k^uK}{B\log_2 (1+\frac{g_k \hat{p_k}}{B \sigma^2 / K})}=E_k^{max}. \label{p_hat} \\[-20pt]\nonumber
\end{align}
To prove this, we first prove $y(p_k)\vspace{+3pt}$ is an increasing function of $p_k$ where $y(p_k) = \frac{p_kD_k^uK}{B\log_2 (1+\frac{g_k p_k}{B \sigma^2 / K})}$.
The first derivative of $y(p_k)$ with respect to $p_k$ is
\begin{small}
\vspace{-5pt}\begin{align}
\frac{\mathrm{d}y(p_k)}{\mathrm{d} p_k} \!=\! \dfrac{\ln2 \!\cdot\! D_k^uK\!\left[\left(Kg_kp_k\!+\!B{\sigma}^2\right)\ln\!\left(\frac{Kg_kp_k}{B{\sigma}^2}\!+\!1\right)\!-\!Kg_kp_k\right]}{B\left(Kg_kp_k\!+\!B{\sigma}^2\right)\ln^2\left(\frac{Kg_kp_k}{B{\sigma}^2}\!+\!1\right)}.\\[-21pt]\nonumber
\end{align}
\end{small}
To investigate the sign of $\frac{\mathrm{d}y(p_k)}{\mathrm{d} p_k}$, we only need to investigate the sign of $[\left(Kg_kp_k+B{\sigma}^2\right)\ln\left(\frac{Kg_kp_k}{B{\sigma}^2}+1\right)-Kg_kp_k]$ since $D_k^u,g_k,p_k,B,\sigma$ and $K$ are all positive values.
Let $t = \frac{Kg_kp_k}{B{\sigma}^2}$ where $t > 0$, then $[\left(Kg_kp_k+B{\sigma}^2\right)\ln\left(\frac{Kg_kp_k}{B{\sigma}^2}+1\right)-Kg_kp_k]$ can be rewritten as $Kg_kp_k[(1+\frac{1}{t})\ln(t+1)-1]$.
It is obvious that $Kg_kp_k > 0$.
Next, let $V(t) = (1+\frac{1}{t})\ln(t+1)-1$.
The first derivative of $V(t)$ is $\frac{\mathrm{d}V(t)}{\mathrm{d} t} = \frac{t-\ln(t+1)}{t^2}$.
It is easy to verify that $t-\ln(t+1)>0$ when $t>0$.
Thus, $V(t)>\lim_{t \to 0}V(t)\!=\!0$ when $t>0$.
Therefore, $\frac{\mathrm{d}y(p_k)}{\mathrm{d} p_k} > 0$ and $y(p_k)$ is a monotonically increasing function of $p_k$ when $p_k>0$.

We then use contradiction. Assume that $p'_k$ ($p'_k \neq p_k^*$, $p'_k \leq  p_{max}$) is the optimal solution.
Since $p_k^* = \min \{p_{max},\hat{p_k}\}$, if $p'_k < p_k^*$, then the transmission rate will decrease, which means the objective function in problem $\mathscr{P}_2$ is not minimized.
Otherwise, if $p'_k > p_k^*$, it indicates $\hat{p_k} < p_{max}$ and $p_k^* = \hat{p_k}$.
As we have already proved that $y(p_k)$ is a monotonically increasing function of $p_k$, then $y(p'_k) > y(\hat{p_k}) = E_k^{max}$, which does not meet the constraint (\ref{con:energy_1}).
Therefore, we have $p'_k = p_k^*$. 
With this, the proof is concluded.


\end{document}